\journal{Operations Research Letters}
\newcommand{\nullitem}{\ensuremath{{\boldsymbol \emptyset}}}
\def\mconc{$M^{\natural}$-concave}
\def\mconcy{$M^{\natural}$-concavity}
\theoremstyle{definition}
\newtheorem{definition}{Definition}[section]
\newtheorem{remark}{Remark}[section]
\newtheorem{example}{Example}[section]
\theoremstyle{plain}
\newtheorem{theorem}{Theorem}
\newtheorem{lemma}{Lemma}[section]
\begin{document}

\begin{frontmatter}{
}

\title{Demand-Flow of Agents with Gross-Substitute Valuations
}

\author{Erel Segal-Halevi\footnote{erelsgl@gmail.com . Corresponding author.},
Avinatan Hassidim\footnote{avinatanh@gmail.com}
and Yonatan Aumann\footnote{aumann@cs.biu.ac.il}
}

\address{Bar-Ilan University, Ramat-Gan 5290002, Israel}
\begin{abstract}
We consider the gross-substitute (GS) condition introduced by Kelso and Crawford (1982). GS is a condition on the demand-flow in a specific scenario: some items become more expensive while other items retain their price. We prove that GS is equivalent to a much stronger condition, describing the demand-flow in the general scenario in which all prices may change: the demand of GS agents always flows (weakly) downwards, i.e, from items with higher price-increase to items with lower price-increase.
\hskip 1cm \textbf{JEL classification}: D11  
\\
\\
\copyright{} 2016. This manuscript version is made available under the CC-BY-NC-ND 4.0 license.

\end{abstract}

\begin{keyword}
Gross Substitutes 
\sep Indivisible Items 
\sep Utility Functions
\sep Demand
\end{keyword}
\end{frontmatter}{}
	
\section{Introduction}
Many markets involve a set of distinct indivisible goods that can be bought and sold for money. The analysis of such markets crucially depends on the agents' \emph{valuation functions} --- the functions that assign monetary values to bundles. It is common to assume that agents' valuations are weakly increasing (more goods mean weakly more value) and quasi-linear in money. Even so, without further restrictions on the valuations, the market may fail to have desirable properties such as the existence of a price-equilibrium. \citet{Kelso1982Job} introduced a property of valuations which they called \emph{gross-substitutes (GS)}. An agent's valuation has the GS property if, when the prices of some items increase, the agent does not decrease its demand for the other items. \citet{Kelso1982Job} proved that a market in which all agents are GS always has a price-equilibrium. \citet{Gul1999Walrasian,Gul2000English} complemented this result by proving that the GS condition is, in some sense, necessary to ensure existence of a price-equilibrium. The GS condition has been widely used in the study of matching markets \cite{Roth1992TwoSided}, auctions \cite{Milgrom2000Putting} and algorithmic mechanism design \cite{2007Algorithmic}.

The GS condition specifies the behavior of an agent in a very specific situation: some items become more expensive, while other items retain their original price. 
In this paper we characterize the behavior of GS agents in the more general situation, in which the prices of all items change in different ways and in different directions. This characterization may have several potential applications:

(a) Analyzing the response of markets to exogenous shocks. For example, 
suppose the government puts price-ceilings on several items. 
With a single item-type, it is obvious that a price-ceiling below the equilibrium-price will result in excess demand. But with multiple item-types, this is not necessarily so.  For example, it is possible that the prices of both item x and item y are below their equilibrium prices, but because of substitution effects, buyers switch from demanding y to demanding x so the net effect is an excess supply in y and an excess demand in x. In order to analyze such markets, we have to understand how exactly agents move from one item-type to another when the prices change. 

(b) designing dynamic combinatorial auctions. In such an auction, the auctioneer modifies the prices of different items in different rates in an attempt to change the aggregate demand. \citet{Gul2000English} describe one such auction, in which the prices are always ascending. In order to design different auctions, it may be useful to know the effect of different price-changes on the agents' demand. 

(c) Using field-data to detect the existence of complementarities (i.e, valuations that are \emph{not} GS) by comparing demands under different price-vectors.

(d) Our original application \citep{SegalHalevi2016MIDA} was a double-auction mechanism where the market-prices are set by the auctioneer in a way that guarantees truthfulness but might not be entirely efficient; understanding the demand-flow of agents let us calculate an upper bound on the loss of efficiency.

Consider two price-vectors: old and new. For every item $x$, define $\Delta_x$ as the price-increase of $x$ (the new price minus the old price). Add a "null item" \nullitem{} and set its price-increase to 0. Arrange the items vertically by ascending price-increase. Then, our main result is that:
\begin{center}
	\emph{\bf The demand of a gross-substitute agent always flows weakly downwards}.
\end{center}
I.e, an agent may switch from wanting an item whose price increased more to an item whose price increased less, but not vice-versa. This property is trivially true for a unit-demand agent, but it is not true when the agent regards some items as complementaries.
\begin{example}\label{exm:negative}
There are three items: x,y,z. Initially their prices are \$10,\$10,\$10. Then, the prices increase by $\Delta_x=\$20,\Delta_y=\$30,\Delta_z=\$40$, so that the new prices are \$30,\$40,\$50. Consider two agents with the following valuations:

\begin{center}
\begin{tabular}{|c|c|c|c|c|c|c|c|}
	\hline       & x & y & z & x+y & x+z & y+z & x+y+z \\ 
	\hline Alice & \$65 & \$70 & \$75 & \$70 & \$75 & \$75 & \$75 \\ 
	\hline Bob   & \$40 & \$40 & \$66 & \$80 & \$75 & \$75 & \$80 \\ 
	\hline 
\end{tabular} 
\end{center}
Alice has unit-demand: she needs only one item and values each bundle as the maximum item in that bundle. Bob regards x and y as complementaries: 
each of them alone is worth less than z, but together they are worth more than x+z and y+z (Note that Bob's valuation is submodular but not GS).

In the initial prices Alice's preferred bundle is z, and after the price-change her preferred bundle is x, so her demand flows downwards --- towards the smaller price-increase.

In contrast, Bob's demand is initially x+y, and after the price-change his demand is z, so his demand flows upwards --- towards the item with the larger price-increase. \qed
\end{example}
Our main result is that GS agents behave like unit-demand agents in this regard: their demand flows only downwards.

\section{Model and Notation}
There is a finite set of indivisible items, $M=\{1,\dots,m\}$. There is an $m$-sized price-vector $p$: a price per item. The price of a bundle is the sum of the prices of the items in it: $p(X):=\sum_{x\in X}p_x$.

The present paper focuses on a single agent with a single valuation-function $u: 2^M \to \mathbb{R}$. $u$ is assumed to be weakly-increasing: if a bundle $X\subseteq Y$ then $u(X)\leq u(Y)$. 

The agent's utility is quasi-linear in money. Given a utility function $u$ and a price-vector $p$, the agent's net-utility function $u_p$ is: $u_p(X) := u(X)-p(X)$.

\begin{definition}\label{def:p-demand}
Given a valuation function $u$ and a price-vector $p$, we say that a bundle $P$ is a \textbf{$p$-demand} if it is optimal for the agent to buy this bundle when the prices are $p$, i.e, the set $P$ maximizes the net-utility function $u_p(\cdot)$ over all bundles of items: $\forall X: u_p(P)\geq u_p(X)$.
\end{definition}
\begin{definition}\label{def:p-demanded}
	Given a valuation function $u$ and a price-vector $p$, we say that an item $x$ is \textbf{$p$-demanded} if there exists a $p$-demand $P$ such that $P\ni x$.
\end{definition}
\begin{definition}\label{def:abandon-discover}
	Given an agent, an item x and a pair of price-vectors $(p,q)$, we say that:
	
	(a) The agent \textbf{abandoned} item x if x is $p$-demanded but not $q$-demanded.
	
	(b) The agent \textbf{discovered} item x if x is $q$-demanded but not $p$-demanded.
\end{definition}

\begin{definition}[\cite{Kelso1982Job}]\label{def:gs}
An agent's valuation function has the \textbf{gross-substitute (GS)} property if, for every price-vectors $(p,q)$ such that $\forall y: \Delta_y\geq 0$, if $\Delta_x=0$ then 
the agent has not abandoned x.
\end{definition}

\begin{definition}\label{def:io}
A valuation has the \textbf{downward-demand-flow (DDF)} property if the following are true for every pair of price-vectors $(p,q)$ (where $\Delta_x := q_x-p_x$):

(a) If $\Delta_x\leq 0$ and the agent abandoned x, then he discovered some y with $\Delta_y<\Delta_x$.
		
(b) If $\Delta_x\geq 0$ and the agent discovered x, then he abandoned some y with $\Delta_y>\Delta_x$.
\end{definition}
\noindent
DDF implies GS: part (a) of the DDF definition implies the GS definition.
Our main result is the converse implication: GS implies DDF.

\section{$M^\natural$-concavity}
Our main technical tool is the following characterization of GS valuations \citep{Fujishige2003Note}:
\begin{definition}\label{def:mconc}
A valuation function $u$ is \emph{\mconc{}} if-and-only-if, for every two bundles $X,Y$ and for every $X' \subseteq X\setminus Y$ with $|X'|=1$ (i.e, X' is a singleton), there exists a subset $Y'\subseteq Y\setminus X$ with $|Y'|\leq 1$ (i.e, Y' is either empty or a singleton) such that: 
\begin{align*}
u(X\setminus X' \cup Y') + u(Y\setminus Y' \cup X') \geq u(X)+u(Y)
\end{align*}
\end{definition}

\begin{lemma}[\citep{Fujishige2003Note}]
A valuation function $u$ is \mconc{} if-and-only-if it is gross-substitute.
\end{lemma}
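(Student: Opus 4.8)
The plan is to prove the two implications of the biconditional separately, treating a price-vector throughout as a modular (additive) function so that net-utility differences split cleanly into a valuation part and a price part. I would argue the easy direction, \mconc{} $\Rightarrow$ GS, directly from the exchange inequality of Definition~\ref{def:mconc}. Fix price-vectors $(p,q)$ with $\Delta_y\geq 0$ for every $y$ and $\Delta_x=0$, and suppose $x$ is $p$-demanded, witnessed by a $p$-demand $X\ni x$. Pick any $q$-demand $Y$ (one exists since $M$ is finite); if $x\in Y$ we are done, so assume $x\in X\setminus Y$. Applying the exchange property to $X,Y$ with the singleton $X'=\{x\}$ yields a subset $Y'\subseteq Y\setminus X$ with $|Y'|\leq 1$ such that
\begin{align*}
u(X\setminus\{x\}\cup Y')+u(Y\cup\{x\}\setminus Y')\geq u(X)+u(Y),
\end{align*}
and since $Y'\subseteq Y\setminus X$ the second bundle still contains $x$. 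Now the $p$-optimality of $X$ lower-bounds $u(X)-u(X\setminus\{x\}\cup Y')$ by the associated price-gap, the $q$-optimality of $Y$ upper-bounds $u(Y\cup\{x\}\setminus Y')-u(Y)$ by its price-gap, and the exchange inequality chains the two together. Combining these three relations, and using $\Delta\geq 0$ to pin down the price of the swapped-out item, forces every inequality to hold with equality; hence $Y\cup\{x\}\setminus Y'$ is itself a $q$-demand, and it contains $x$. Thus $x$ is $q$-demanded and the agent has not abandoned it, which is exactly Definition~\ref{def:gs}. Notably this needs only one application of the exchange property and no induction.

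The converse, GS $\Rightarrow$ \mconc{}, is the harder direction. Given arbitrary bundles $X,Y$ and $x\in X\setminus Y$, I must \emph{exhibit} the exchanging set $Y'$. The difficulty is that $X$ and $Y$ are arbitrary bundles that need not be $p$-demands for any common $p$ --- indeed a given bundle is a demand only if it lies on the concave envelope of $u$ --- so I cannot simply read $Y'$ off a demand-flow. The route I would take is the classical two-step one. First, derive from the per-item condition of Definition~\ref{def:gs} a \emph{single-improvement} statement: whenever a bundle $Z$ is not a $p$-demand, there is a bundle obtained from $Z$ by adding, deleting, or swapping a single item whose net-utility is strictly larger; this step needs a careful price-perturbation argument to move from ``item $x$ is not abandoned'' to a statement about whole demand sets. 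Second, lift this local property to the global exchange inequality by induction on $|X\triangle Y|$, at each stage using single-improvement under a suitably chosen price-vector to perform one exchanging move and then invoking the inductive hypothesis on the reduced pair.

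I expect the second step --- the local-to-global lifting --- to be the crux, since it is precisely the gap between the local and global forms of discrete concavity: one must ensure that the single swaps supplied by single-improvement can be assembled so that the summed valuation never drops along the way, and that the price-vector witnessing each swap can be chosen consistently. Because this equivalence is standard, the paper cites \citet{Fujishige2003Note}, who in turn lean on the single-improvement machinery of \citet{Gul1999Walrasian}; my plan follows the same division of labor, isolating the easy direction as a short self-contained argument and deferring the delicate combinatorial lifting to the established theory.
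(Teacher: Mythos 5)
The paper does not prove this lemma at all --- it is imported verbatim from \citet{Fujishige2003Note}, so there is no in-paper proof to compare against. Your treatment of the direction \mconc{} $\Rightarrow$ GS is correct and self-contained: writing $Y''=Y\setminus Y'\cup\{x\}$, the three relations you name chain into $q_x-q(Y')\geq u(Y'')-u(Y)\geq u(X)-u(X\setminus\{x\}\cup Y')\geq p_x-p(Y')$, while $\Delta_x=0$ and $\Delta\geq 0$ give $q_x-q(Y')\leq p_x-p(Y')$, so the sandwich collapses and $Y''$ is a $q$-demand containing $x$. That is a genuine (and standard) proof of one implication, which is more than the paper supplies. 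The converse direction, however, remains only a plan in your proposal: the single-improvement derivation and the induction on $|X\triangle Y|$ are precisely where all the work of Fujishige--Yang and Gul--Stacchetti lives, and you do not carry out either step. Since the paper itself leans entirely on the citation, deferring that direction is consistent with the paper's division of labor, but you should be explicit that your writeup establishes only half of the equivalence and cites the other half rather than proving it.
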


\begin{shortversion}
Using the \mconcy{} characterization, it is easy to prove that GS is preserved in net-utility functions and marginal-valuation functions:
\begin{lemma}\label{lem:mconc-net}
Let $p$ be an arbitrary price vector. A valuation function $u$ is \mconc{} if-and-only-if the net-utility function $u_p$ is \mconc{}.
\end{lemma}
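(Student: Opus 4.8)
The plan is to work directly with the \mconcy{} characterization (Definition~\ref{def:mconc}) rather than with the original GS definition, since the exchange inequality there is symmetric and interacts cleanly with additive price terms. Because $u$ is \mconc{} exactly when it is GS (the cited equivalence of Fujishige), and likewise for $u_p$, it suffices to prove the purely combinatorial statement that $u$ is \mconc{} if-and-only-if $u_p$ is \mconc{}.

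First I would fix two bundles $X,Y$ and a singleton $X'\subseteq X\setminus Y$, exactly as in Definition~\ref{def:mconc}, and compare the exchange inequality for $u$ with the one for $u_p$. For an arbitrary candidate $Y'\subseteq Y\setminus X$ with $|Y'|\le 1$, set $A:=(X\setminus X')\cup Y'$ and $B:=(Y\setminus Y')\cup X'$. The heart of the argument is to show that the modular price term cancels. Here I would use that $X'$ and $Y'$ are disjoint from each other and from the ``other'' bundle, i.e.\ $X'\cap Y=\emptyset$ and $Y'\cap X=\emptyset$; additivity of $p(\cdot)$ then gives $p(A)=p(X)-p(X')+p(Y')$ and $p(B)=p(Y)-p(Y')+p(X')$, whence $p(A)+p(B)=p(X)+p(Y)$. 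Intuitively the exchange merely moves $X'$ from $X$ to $Y$ and $Y'$ from $Y$ to $X$, so the total price summed over the two bundles is unchanged.

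Given this cancellation, I would subtract $p(X)+p(Y)$ from both sides of the $u_p$-inequality and observe that
\[
u_p(A)+u_p(B)-\big(u_p(X)+u_p(Y)\big)=u(A)+u(B)-\big(u(X)+u(Y)\big).
\]
Hence the exchange inequality $u_p(A)+u_p(B)\ge u_p(X)+u_p(Y)$ holds for a given $Y'$ precisely when $u(A)+u(B)\ge u(X)+u(Y)$ does. Therefore any witness $Y'$ for one function is automatically a witness for the other, and quantifying over all $X,Y,X'$ shows the two \mconcy{} conditions are equivalent.

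I do not expect a genuine obstacle here: the only delicate point is the bookkeeping of which sets are disjoint, so that the two price decompositions above are valid. The entire argument rests on $p(\cdot)$ being modular, so the price contributions on the two sides of the exchange inequality are identical; no property of $u$ beyond the definition is used, and the reasoning is fully symmetric in $u$ and $u_p$ (since $u=(u_p)_{-p}$, one direction immediately yields the other).
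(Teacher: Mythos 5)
Your proposal is correct and follows essentially the same route as the paper: both proofs rest on the modularity of $p(\cdot)$, showing that $p(X\setminus X'\cup Y')+p(Y\setminus Y'\cup X')=p(X)+p(Y)$ so that the exchange inequality for $u_p$ is term-by-term equivalent to the one for $u$. Your extra bookkeeping about disjointness and the symmetry remark $u=(u_p)_{-p}$ are fine but add nothing beyond what the paper's one-line cancellation already establishes.
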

\begin{definition}
Given a valuation $u$ and a constant bundle $Z$, the \textbf{marginal valuation} $u_{Z+}$ is a function that returns, for every bundle $X$ that does not intersect $Z$, the additional value that an agent holding $Z$ gains from having $X$:
\begin{align*}
u_{Z+}(X) := u(Z\cup X)-u(Z) && \text{for all $X$ with $X\cap Z=\emptyset$}
\end{align*}
\end{definition}
\begin{lemma}[\cite{Ostrovsky2015Gross}]\label{lem:mconc-marginal}
A valuation function $u$ is GS if-and-only-if, for every bundle $Z$, the marginal-valuation function $u_{Z+}$ is GS.
\end{lemma}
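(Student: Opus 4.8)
The plan is to argue entirely through the \mconcy{} characterization (the lemma of Fujishige and Yang stated just above), since it converts the claim into a purely combinatorial exchange inequality. The backward direction is immediate: taking $Z=\emptyset$ gives $u_{\emptyset+}(X)=u(X)-u(\emptyset)$, which differs from $u$ by an additive constant and therefore has exactly the same net-utility maximizers at every price-vector. Hence $u_{\emptyset+}$ being GS forces $u$ to be GS, and the interesting content is the forward direction.

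For the forward direction, I would fix a bundle $Z$ and verify the \mconcy{} exchange property for $u_{Z+}$ directly. Take bundles $X,Y\subseteq M\setminus Z$ and a singleton $X'\subseteq X\setminus Y$. Expanding $u_{Z+}(A)=u(Z\cup A)-u(Z)$, the four $-u(Z)$ terms cancel in pairs, so the inequality to be proved for $u_{Z+}$ is equivalent to
\[
u\bigl(Z\cup(X\setminus X'\cup Y')\bigr) + u\bigl(Z\cup(Y\setminus Y'\cup X')\bigr) \;\geq\; u(Z\cup X)+u(Z\cup Y).
\]
The key step is to read this as the \mconcy{} inequality for $u$ applied to the lifted bundles $\widetilde X:=Z\cup X$ and $\widetilde Y:=Z\cup Y$. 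Because $X$ and $Y$ are disjoint from $Z$, one checks that $\widetilde X\setminus\widetilde Y=X\setminus Y$ and $\widetilde Y\setminus\widetilde X=Y\setminus X$, so $X'$ is a singleton contained in $\widetilde X\setminus\widetilde Y$. Applying \mconcy{} of $u$ then produces a set $Y'\subseteq\widetilde Y\setminus\widetilde X=Y\setminus X$ with $|Y'|\leq 1$, and this $Y'$ is exactly the exchange set required for $u_{Z+}$.

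The only point needing care is the set-algebra bookkeeping: I must confirm that $\widetilde X\setminus X'\cup Y'=Z\cup(X\setminus X'\cup Y')$ and $\widetilde Y\setminus Y'\cup X'=Z\cup(Y\setminus Y'\cup X')$. Both hold because $X'$ and $Y'$ are disjoint from $Z$ (being contained in $X$ and $Y$ respectively), so deleting them from $Z\cup X$ or $Z\cup Y$ never touches $Z$, and re-inserting them leaves $Z$ intact. I do not expect a genuine obstacle here — the heart of the argument is simply that the additive constant $u(Z)$ drops out and that restricting the ground set to $M\setminus Z$ is compatible with the exchange property, so the hardest part is merely stating these set identities cleanly and confirming that the $Y'$ supplied by $u$ lives in the right ground set $M\setminus Z$.
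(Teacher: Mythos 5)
Your proof is correct and follows essentially the same route as the paper's: the backward direction via $Z=\emptyset$, and the forward direction by applying the \mconcy{} exchange property of $u$ to the lifted bundles $Z\cup X$ and $Z\cup Y$ and checking the same set identities. Your handling of the constant $u(\emptyset)$ in the backward direction is in fact slightly more careful than the paper's claim that $u_{\emptyset+}\equiv u$.
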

\end{shortversion}
\begin{fullversion}
\noindent
Below we prove that \mconcy{} is preserved in net-utility and marginal-valuation functions.

\begin{lemma}\label{lem:mconc-net}
	Let $p$ be an arbitrary price vector. A valuation function $u$ is \mconc{} if-and-only-if the net-utility function $u_p$ is \mconc{}.
\end{lemma}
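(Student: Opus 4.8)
The plan is to show that the defining inequality of \mconcy{} for the net-utility function $u_p$ is, term by term, exactly the defining inequality for $u$ together with a collection of price terms that cancel identically. If this holds, then the same witness $Y'$ serves for $u_p$ precisely when it serves for $u$, and the two conditions become equivalent in one stroke, giving both directions of the ``if-and-only-if'' simultaneously.

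First I would fix arbitrary bundles $X,Y$ and an arbitrary singleton $X'\subseteq X\setminus Y$, together with a candidate $Y'\subseteq Y\setminus X$ with $|Y'|\leq 1$, as in Definition~\ref{def:mconc}. Writing out the target inequality for $u_p$ and substituting $u_p(\cdot)=u(\cdot)-p(\cdot)$, I would isolate the four price terms $-p(X\setminus X'\cup Y')$, $-p(Y\setminus Y'\cup X')$, $+p(X)$, and $+p(Y)$, and show their sum is zero.

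The key computation---and essentially the only place where care is needed---is to evaluate these price terms using the disjointness that the definition already supplies. Since $X'\subseteq X$ and $Y'\cap X=\emptyset$, the items being removed and added are disjoint and lie respectively inside and outside $X$, so $p(X\setminus X'\cup Y')=p(X)-p(X')+p(Y')$; symmetrically, since $Y'\subseteq Y$ and $X'\cap Y=\emptyset$ (because $X'\subseteq X\setminus Y$), we get $p(Y\setminus Y'\cup X')=p(Y)-p(Y')+p(X')$. Adding everything, the contributions $\pm p(X')$, $\pm p(Y')$, $\pm p(X)$, $\pm p(Y)$ cancel in pairs. I expect this bookkeeping to be the main obstacle, though it is routine once the three disjointness facts are tracked correctly.

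Consequently, for the fixed tuple $(X,Y,X',Y')$ the inequality for $u_p$ holds if and only if the identical inequality $u(X\setminus X'\cup Y')+u(Y\setminus Y'\cup X')\geq u(X)+u(Y)$ holds for $u$. Because the choice of $Y'$ is the same on both sides, the existential clause ``there exists a suitable $Y'$'' is met for $u_p$ exactly when it is met for $u$; and as $X,Y,X'$ were arbitrary, $u_p$ is \mconc{} if and only if $u$ is, which completes the proof.
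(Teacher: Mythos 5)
Your proposal is correct and follows essentially the same route as the paper: both arguments use the additivity of $p$ to show that $p(X\setminus X'\cup Y')+p(Y\setminus Y'\cup X')=p(X)+p(Y)$, so the \mconcy{} inequality for $u_p$ is term-for-term equivalent to the one for $u$ with the same witness $Y'$. Your version merely spells out the disjointness bookkeeping ($X'\subseteq X$, $Y'\cap X=\emptyset$, etc.) in more detail than the paper does.
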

\begin{proof}
	The price function $p(\cdot)$ is additive. Hence, for all $X'\subseteq X\setminus Y$ and $Y'\subseteq Y\setminus X$:
	\begin{align*}
	p(X\setminus X'\cup Y') + p(Y\setminus Y'\cup X') = p(X) + p(Y)
	\end{align*}
	Hence, the \mconc{} condition is independent of price:
	\begin{align*}
	u(X\setminus X'\cup Y') + u(Y\setminus Y'\cup X') \geq& u(X)+u(Y)
	\\
	\iff&
	\\
	u_p(X\setminus X'\cup Y') + u_p(Y\setminus Y'\cup X') \geq& u_p(X)+u_p(Y)
	\end{align*}
\end{proof}

\begin{definition}
	Given a valuation $u$ and a constant bundle $Z$, the \textbf{marginal valuation} $u_{Z+}$ is a function that returns, for every bundle $X$ that does not intersect $Z$, the additional value that an agent holding $Z$ gains from having $X$:
	\begin{align*}
	u_{Z+}(X) := u(Z\cup X)-u(Z) && \text{for all $X$ with $X\cap Z=\emptyset$}
	\end{align*}
\end{definition}

\begin{lemma}\label{lem:mconc-marginal}
	A valuation function $u$ is \mconc{} if-and-only-if, for every bundle $Z$, the marginal-valuation function $u_{Z+}$ is \mconc{}.
\end{lemma}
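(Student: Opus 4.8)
The plan is to prove the two directions separately, with the forward implication being immediate and the reverse carrying all the content. For the \emph{if} direction, I would specialize to $Z=\emptyset$: then $u_{\emptyset+}(X)=u(X)-u(\emptyset)$ differs from $u$ only by the additive constant $u(\emptyset)$. Since every instance of the \mconc{} inequality contains exactly two occurrences of the valuation on each side, adding a constant contributes the same amount to both sides and leaves the inequality unchanged; hence \mconcy{} of $u_{\emptyset+}$ is equivalent to \mconcy{} of $u$, which gives the claim.

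For the \emph{only if} direction, fix a bundle $Z$ and suppose $u$ is \mconc{}. I must verify the defining inequality for $u_{Z+}$: given $X,Y\subseteq M\setminus Z$ and a singleton $X'\subseteq X\setminus Y$, I need to produce $Y'\subseteq Y\setminus X$ with $|Y'|\le 1$ realizing the exchange inequality. The key observation is that each term $u_{Z+}(\cdot)$ carries a $-u(Z)$, and the inequality has two terms per side, so these four constants cancel; the inequality for $u_{Z+}$ at $(X,Y)$ is therefore \emph{equivalent} to the plain \mconc{} inequality for $u$ at the enlarged bundles $\tilde X:=Z\cup X$ and $\tilde Y:=Z\cup Y$.

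The crux is then a set-algebra check. Because $X,Y$ are disjoint from $Z$, one has $\tilde X\setminus\tilde Y=X\setminus Y$ and $\tilde Y\setminus\tilde X=Y\setminus X$, so the singleton $X'$ also lies in $\tilde X\setminus\tilde Y$. I would apply \mconcy{} of $u$ to the triple $\tilde X,\tilde Y,X'$ to obtain a $Y'\subseteq\tilde Y\setminus\tilde X=Y\setminus X$ with $|Y'|\le 1$. Finally I would rewrite the resulting bundles: since $X'\subseteq X$ and $Y'\subseteq Y$ are both disjoint from $Z$, we have $\tilde X\setminus X'\cup Y'=Z\cup(X\setminus X'\cup Y')$ and symmetrically $\tilde Y\setminus Y'\cup X'=Z\cup(Y\setminus Y'\cup X')$, so the inequality just obtained for $u$ is exactly the (constant-shifted) inequality needed for $u_{Z+}$, witnessed by the same $Y'$.

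The argument is essentially routine once the cancellation of the $-u(Z)$ terms is noticed; I expect the only place demanding care is the set-theoretic bookkeeping --- confirming that translating both bundles by the common set $Z$ preserves the difference sets $X\setminus Y$ and $Y\setminus X$, so that precisely the same exchange singletons are admissible before and after adding $Z$. No genuine obstacle arises beyond this, since \mconcy{} is preserved verbatim under such a common translation of both bundles.
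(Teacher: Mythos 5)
Your proof is correct and follows essentially the same route as the paper's: the ``if'' direction via $Z=\emptyset$ (with the additive constant $u(\emptyset)$ handled slightly more carefully than in the paper), and the ``only if'' direction by applying \mconcy{} of $u$ to $Z\cup X$ and $Z\cup Y$, cancelling the $-u(Z)$ terms, and verifying the same set-algebra identities $(Z\cup X)\setminus(Z\cup Y)=X\setminus Y$ and $Z\cup(X\setminus X'\cup Y')=(Z\cup X)\setminus X'\cup Y'$. No gaps.
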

\begin{proof}
	The "if" direction is obvious since $\emptyset$ is also a bundle and $u_{\emptyset+}\equiv u$.
	
	For the "only if" direction, suppose $u$ is \mconc{} and let $Z$ be an arbitrary bundle. We have to prove that $u_{Z+}$ is \mconc{}, i.e, for all bundles $X,Y$ with $X\cap Z=Y\cap Z=\emptyset$, and for every $X'\subseteq X\setminus Y$ with $|X'|=1$, there exists a $Y'\subseteq Y\setminus X$ with $|Y'|\leq 1$ such that:
	\begin{align}\label{eq:uc-is-mconc}
	u(Z\cup (X\setminus X'\cup Y')) + u(Z\cup (Y\setminus Y'\cup X')) \geq& u(Z\cup X)+u(Z \cup Y)
	\end{align} 
	Since $u$ is \mconc{}, we can apply the definition of \mconc{} to the bundles $Z\cup X$ and $Z\cup Y$. For every $X''\subseteq (Z\cup X)\setminus(Z\cup Y)$ with $|X''|=1$, there exists $Y''\subseteq (Z\cup Y)\setminus(Z\cup X)$ with $|Y''|\leq 1$ such that:
	\begin{align}\label{eq:u-is-mconc}
	u((Z\cup X)\setminus X''\cup Y'') + u((Z\cup Y)\setminus Y''\cup X'') \geq& u(Z\cup X)+u(Z \cup Y)
	\end{align}
	This is particularly true when $X''=X'$ from above, since $(Z\cup X)\setminus(Z\cup Y)\equiv X\setminus Y$. We can take $Y':=Y''$, since $(Z\cup Y)\setminus(Z\cup X)\equiv Y\setminus X$. It remains to prove that (\ref{eq:u-is-mconc}) implies (\ref{eq:uc-is-mconc}). 
	
	Indeed, since $X\cap Z= X'\cap Z = Y'\cap Z = \emptyset$:
	\begin{align*}
	Z\cup (X\setminus X'\cup Y') = (Z\cup X)\setminus X'\cup Y',
	\end{align*}
	since it does not matter whether we first add $X$ to $Z$ and then remove some items from the union, or first remove these items from $X$ and then add the remaining items to $Z$. Similarly, since also $Y\cap Z= \emptyset$:
	\begin{align*}
	Z\cup (Y\setminus Y'\cup X') = (Z\cup Y)\setminus Y'\cup X'
	\end{align*}
	so (\ref{eq:u-is-mconc}) and (\ref{eq:uc-is-mconc}) are equivalent.
\end{proof}
\end{fullversion}

\section{Telescopic Arrangement of Maximizing Bundles}
By definition, an agent's demanded bundles are \emph{maximizing-bundles} --- 
bundles that maximize his net-utility over all $2^m$ possible bundles. In addition to the global maximizing-bundles, we can consider the maximizing-bundles in each size-group, i.e, the maximizing-bundles among the bundles with 1 item, with 2 items, etc. In this section we prove that, when the agents' valuation is \mconc{}, the maximizing-bundles in the different size-groups have a telescopic arrangement: each maximizing-bundle contains smaller maximizing-bundles and is contained in larger maximizing-bundles.
\begin{definition}
Given valuation $u$ on $m$ items and a number $i\in\{0,\dots,m\}$, a bundle $Z_i$ is called \textbf{$i$-maximizer of $u$} if it maximizes $u$ among all bundles with $i$ items. I.e, $|Z_i|=i$ and for every other bundle $X_i$ with $i$ items, $u(Z_i)\geq u(X_i)$.
\end{definition}
\begin{lemma}\label{lem:maximizers}
For every \mconc{} valuation $u$ on $m$ items and two integers $i,j$ such that $0\leq i<j\leq m$:

(a) For every $i$-maximizer $Z_i$ there is a $j$-maximizer $Z_j'$ such that $Z_j'\supset Z_i$.

(b) For every $j$-maximizer $Z_j$ there is an $i$-maximizer $Z_i'$ such that $Z_j\supset Z_i'$.
\end{lemma}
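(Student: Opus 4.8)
The plan is to first reduce both statements to the consecutive case $j=i+1$, and then settle that case with a single exchange argument. Note that a $k$-maximizer exists for every $0\le k\le m$, since the $k$-item bundles form a finite nonempty family. Hence for part (a) it suffices to show that every $i$-maximizer is contained in some $(i+1)$-maximizer: chaining this from $i$ up to $j$ produces maximizers $Z_i\subset Z_{i+1}\subset\cdots\subset Z_j$, and transitivity of $\subset$ gives the desired $Z_j\supset Z_i$. Part (b) reduces symmetrically by chaining downward from $j$ to $i$, using that every $(i+1)$-maximizer contains some $i$-maximizer.

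For the consecutive case I would run an extremal argument driven by \mconcy{} (Definition \ref{def:mconc}). For part (a), fix the given $i$-maximizer $Z_i$ and choose, among all $(i+1)$-maximizers, one $A$ that maximizes the overlap $|A\cap Z_i|$. Assume for contradiction that $Z_i\not\subseteq A$. Because $|A|>|Z_i|$, the set $A\setminus Z_i$ is nonempty, so I may pick $x\in A\setminus Z_i$ and apply the exchange property with $X:=A$, $Y:=Z_i$, and the singleton $X':=\{x\}$. This yields a $Y'\subseteq Z_i\setminus A$ with $|Y'|\le 1$ and $u(A\setminus\{x\}\cup Y')+u(Z_i\setminus Y'\cup\{x\})\ge u(A)+u(Z_i)$.

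The crux, and the step I expect to be the main obstacle, is precisely that the exchange only guarantees $|Y'|\le 1$, so $Y'$ may be empty; in a careless setup this would alter the cardinalities of the two resulting bundles and wreck the comparison against maximizers. The decisive idea is to always draw the moved singleton $x$ from the \emph{larger} bundle $A$, because then \emph{both} possibilities keep the two resulting sizes inside $\{i,i+1\}$. If $Y'=\{y\}$, the bundles $A\setminus\{x\}\cup\{y\}$ and $Z_i\setminus\{y\}\cup\{x\}$ have sizes $i+1$ and $i$; each is at most the value of a maximizer of its own size, so the displayed inequality forces both to be maximizers, and $A\setminus\{x\}\cup\{y\}$ is an $(i+1)$-maximizer whose overlap with $Z_i$ exceeds that of $A$ (we dropped $x\notin Z_i$ and added $y\in Z_i$), contradicting the choice of $A$. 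If instead $Y'=\emptyset$, the bundles $A\setminus\{x\}$ and $Z_i\cup\{x\}$ have sizes $i$ and $i+1$, so the same bound forces $Z_i\cup\{x\}$ to be an $(i+1)$-maximizer; but it contains $Z_i$, again contradicting $Z_i\not\subseteq A$. Either way the assumption fails, proving part (a).

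Part (b) is the mirror image: fix the given $(i+1)$-maximizer $A$, choose an $i$-maximizer $B$ maximizing $|B\cap A|$, pick $x\in A\setminus B$, and apply the same exchange with $X:=A$, $Y:=B$, $X':=\{x\}$. When $Y'=\emptyset$ the bundle $A\setminus\{x\}$ is directly an $i$-maximizer contained in $A$, which is the conclusion; when $Y'=\{y\}$ the bundle $B\setminus\{y\}\cup\{x\}$ is an $i$-maximizer with strictly larger overlap with $A$ (we dropped $y\notin A$ and added $x\in A$), contradicting the choice of $B$. I would emphasize that this cardinality bookkeeping is what makes the argument self-contained, removing any need to separately establish concavity of the size-restricted maximum of $u$.
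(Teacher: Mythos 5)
Your proof is correct, but it takes a genuinely different and noticeably more elementary route than the paper's. The paper proves the lemma by induction on the number of items $m$, with a three\-/way case analysis: if some item lies outside both $Z_i$ and $Z_j$ it restricts $u$ to the remaining items; if some item lies in both it passes to the marginal valuation $u_{\{1\}+}$, which requires the closure of \mconcy{} under marginals (Lemma \ref{lem:mconc-marginal}); and the residual case $i+j=m$ with $j=i+1$ is handled by a somewhat delicate combination of the first two. You instead reduce to consecutive sizes $j=i+1$ by chaining, and settle that case with a single extremal exchange argument: choosing the maximizer of adjacent size with largest overlap, and always drawing the moved singleton from the \emph{larger} bundle so that, whether $Y'$ is empty or a singleton, both resulting bundles have sizes in $\{i,i+1\}$ and can be compared against the maximizers of their own sizes. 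This forces the exchange inequality to be an equality and either produces a maximizer with strictly larger overlap (contradiction) or directly exhibits the desired containment. Your argument is shorter, avoids the induction on $m$ entirely, and does not need Lemma \ref{lem:mconc-marginal} at all, which the paper invokes only inside this proof. One cosmetic point: in part (a), the $Y'=\emptyset$ subcase does not literally contradict $Z_i\not\subseteq A$; it contradicts the maximality of $|A\cap Z_i|$, since $Z_i\cup\{x\}$ is an $(i+1)$-maximizer with full overlap $i$ (or, equivalently, it just yields the conclusion directly). The logic is unaffected.
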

\begin{proof}
The lemma is obviously true when $i=0$ since there is a unique 0-maximizer (the empty set). It is also true when $j=m$ since there is a unique $m$-maximizer (the set containing all items). We have to prove it for $1\leq i<j\leq m-1$, which is possible only when $m\geq 3$. The proof is by induction on $m$. 

\textbf{Base:} $m=3, j=2, i=1$. Let $Z_1$ be a 1-maximizer and $Z_2$ a 2-maximizer. If $Z_1\subseteq Z_2$ then we are done. Otherwise, $Z_1$ contains a single item, e.g. $\{x\}$, and $Z_2$ contains the other two items, $\{y,z\}$. Apply the \mconcy{} definition with $X=Z_2$ and $Y=Z_1$ and $X'=\{y\}$. Then, $Y'$ can be either $\emptyset$ or $\{x\}$:
\begin{itemize}
\item If $Y'=\emptyset$, then by the \mconcy{} condition: $u(\{z\})+u(\{x,y\})\geq u(\{x\})+u(\{y,z\})$. Then $\{z\}$ must be a 1-maximizer and $\{x,y\}$ must be a 2-maximizer; the former is contained in $Z_2$ and the latter contains $Z_1$ so we are done.
\item If $Y'=\{x\}$, then by the \mconcy{} condition: $u(\{x,z\})+u(\{y\})\geq u(\{x\})+u(\{y,z\})$. Then $\{y\}$ must be a 1-maximizer and $\{x,z\}$ must be a 2-maximizer; the former is contained in $Z_2$ and the latter contains $Z_1$ so we are done.
\end{itemize}

\textbf{Step:} we assume that the lemma is true when there are less than $m$ items and prove that it is true for $m$ items, where $m\geq 4$. Let $Z_i$ be an $i$-maximizer and $Z_j$ a $j$-maximizer. We consider several cases.

\textbf{Case 1:} There is an item which is not in $Z_i$ nor in $Z_j$. W.l.o.g. call it item 1. Let $u'$ be the restriction of $u$ to the items $\{2,\dots,m\}$. Then $Z_i$ is an $i$-maximizer of $u'$ and $Z_j$ is a $j$-maximizer of $u'$. By the induction assumption, the lemma is true for $u'$. Hence, there is a $j$-maximizer of $u'$, say $Z_j'$, which contains $Z_i$. Since both $Z_j$ and $Z_j'$ are $j$-maximizers of $u'$, $u'(Z_j)=u'(Z_j')$. Hence $u(Z_j)=u(Z_j')$. Hence, $Z_j'$ is also a $j$-maximizer of $u$, so part (a) is done. Similarly, there is an $i$-maximizer of $u'$, say $Z_i'$, which is contained in $Z_j$. Since both $Z_i$ and $Z_i'$ are $i$-maximizers of $u'$, $u'(Z_i)=u'(Z_i')$. Hence $u(Z_i)=u(Z_i')$. Hence, $Z_i'$ is also an $i$-maximizer of $u$, so part (b) is done.

\textbf{Case 2:} There is an item which is in both $Z_i$ and $Z_j$. W.l.o.g. call it item 1. Let $u'$ be the marginal valuation function $u_{\{1\}+}$. By Lemma \ref{lem:mconc-marginal}, $u'$ also is \mconc{}. It is a valuation function on $m-1$ items, $\{2,\dots,m\}$. The bundle $Z_{i-1}=Z_{i}\setminus\{1\}$ is an $(i-1)$-maximizer of $u'$ and the bundle $Z_{j-1}=Z_{j}\setminus\{1\}$ is a $(j-1)$-maximizer of $u'$. By the induction assumption the lemma is true for $u'$. Hence, there is a $j-1$-maximizer of $u'$, say $Z_{j-1}'$, which contains $Z_{i-1}$. Since both $Z_{j-1}'$ and $Z_{j-1}$ are $(j-1)$-maximizers of $u'$, $u'(Z_{j-1})=u'(Z_{j-1}')$. By definition of the marginal valuation function, this equality is equivalent to: $u(Z_j)=u(Z_{j-1}'\cup \{1\})$. Since $Z_j$ is a $j$-maximizer of $u$, $Z_{j-1}'\cup \{1\}$ is also a $j$-maximizer of $u$. It contains $Z_{i-1}\cup \{1\}=Z_i$ so part (a) is done. Similarly, there is an $i-1$-maximizer of $u'$, say $Z_{i-1}'$, which is contained in $Z_{j-1}$. Since both $Z_{i-1}$ and $Z_{i-1}'$ are $(i-1)$-maximizers of $u'$, $u'(Z_{i-1})=u'(Z_{i-1}')$. By definition of the marginal valuation function, this equality is equivalent to: $u(Z_i) =u(Z_{i-1}'\cup \{1\})$. Since $Z_i$ is an $i$-maximizer of $u$, $u(Z_{i-1}'\cup \{1\})$ is also an $i$-maximizer of $u$. It is contained in $Z_{j-1}\cup\{1\}=Z_j$ so part (b) is done.

By Case 1, the lemma is true whenever $i+j<m$.

By Case 2, the lemma is true whenever $i+j>m$. 

\textbf{Case 3:} $i+j=m$. If $i+1<j$, then $i+(i+1)<m$ and $(i+1)+j>m$. Hence, by cases 1 and 2, there is an $(i+1)$-maximizer, $Z_{i+1}'$, containing $Z_i$ and an $(i+1)$-maximizer, $Z_{i+1}''$, contained in $Z_j$. Again by cases 1 and 2, there is a $j$-maximizer $Z_j'$ containing $Z_{i+1}'$, and an $i$-maximizer $Z_i''$ contained in $Z_{i+1}''$. $Z_j'$ contains $Z_i$ and $Z_i''$ is contained in $Z_j$ so we are done.

The only case that remains is: $i+j=m$ and $i+1=j$. In that case, $m=2i+1$ (the total number of items is odd). The case $m=3, i=1, j=m-1$ was already handled in the Base, so we can assume that $m\geq 5, i\geq 2, j\leq m-2$.

Since $i+(j+1)>m$, by Case 2 part (a), there exists a $(j+1)$-maximizer, $Z_{j+1}'$, which contains $Z_i$. Also, $j+(j+1)>m$, so by Case 2 part (b), there exists a $j$-maximizer, $Z_j'$, contained in $Z_{j+1}'$. Starting at $Z_i$, we added two items to create $Z_{j+1}'$ and then removed one item to create $Z_j'$. Since $i\geq 2$, at least one item of $Z_i$ is also in $Z_j'$. Hence, $Z_i$ and $Z_j'$ are covered by Case 2. By part (a), there exists a $j$-maximizer containing $Z_i$.

Similarly, $j+(i-1)<m$. Hence, by Case 1 part (b), there exists an $(i-1)$-maximizer, $Z_{i-1}'$, contained in $Z_j$. Also, $i+(i-1)<m$, so by Case 1 part (a), there exists an $i$-maximizer, $Z_i'$, containing $Z_{i-1}'$. Starting at $Z_j$, we removed two items to create $Z_{i-1}'$ and then added one item to create $Z_i'$. Since $j\leq m-2$, at least one item not in $Z_j$ is also not in $Z_i'$. Hence, $Z_j$ and $Z_i'$ are covered by Case 1. By part (b), there exists an $i$-maximizer contained in $Z_j$.
\end{proof}

\begin{remark}
For a non-GS valuation, the ``telescopic'' property may or may not hold. For example, it holds trivially for any valuation on two item-types; it does not hold for Bob's valuation in Example \ref{exm:negative}.
\end{remark}

\section{Uniform price change}
In this section, we prove an intermediate result about the demand-flow of GS agents that may be interesting in its own right: if all items become cheaper by the same additive amount then the agent does not abandon any item, and if all items become more expensive by the same additive amount then the agent does not discover any item. 


\begin{lemma}\label{lem:mconc-d}
Let $p$ be a price-vector, $d$ a real constant, and $p'$ another price-vector such that for every item x: $p'_x = p_x + d$. If the valuation is \mconc{}, then for every bundle $P$:
	
(a) If $d\leq 0$ and $P$ is a $p$-demand, then there exists a $p'$-demand  $P'\supseteq P$.
	
(b) If $d\geq 0$ and $P$ is a $p'$-demand, then there exists a  $p$-demand $P'\supseteq P$.
\end{lemma}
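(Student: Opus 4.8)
The plan is to reduce everything to the per-size maximizers studied in Lemma~\ref{lem:maximizers}, and then to run a short comparative-statics argument on the optimal bundle size. I would prove part~(a) in full and remark that part~(b) is completely symmetric (exchange the roles of $p$ and $p'$ and replace $d$ by $-d$).

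First I would dispose of the case $d=0$, where $p'=p$ and we may simply take $P'=P$; so assume $d<0$ and write $c:=-d>0$. By Lemma~\ref{lem:mconc-net} both $u_p$ and $u_{p'}$ are \mconc{}, and since $u_{p'}(X)=u_p(X)-d|X|=u_p(X)+c|X|$, the two net-utility functions differ only by a term that is constant on each size-class. Consequently a bundle of size $j$ is a $j$-maximizer of $u_p$ if and only if it is a $j$-maximizer of $u_{p'}$. Writing $M_j:=\max_{|X|=j}u_p(X)$, we then have $\max_{|X|=j}u_{p'}(X)=M_j+cj$, so the global optima of the two functions are $\max_j M_j$ and $\max_j(M_j+cj)$ respectively.

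Next I would pin down the sizes of the demanded bundles. Since $P$ is a $p$-demand, its size $k:=|P|$ maximizes $M_j$, i.e.\ $M_k\ge M_j$ for all $j$, and $P$ is itself a $k$-maximizer of $u_p$, hence also of $u_{p'}$. Let $Q$ be any $p'$-demand and $\ell:=|Q|$; then $\ell$ maximizes $g(j):=M_j+cj$. The key observation is that this forces $\ell\ge k$: if $\ell<k$ then $g(\ell)\ge g(k)$ yields $M_\ell-M_k\ge c(k-\ell)>0$, contradicting $M_k\ge M_\ell$. Intuitively, a uniform price drop can only push the optimal demand size weakly upward.

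Finally I would invoke the telescopic structure. Applying Lemma~\ref{lem:maximizers}(a) to the \mconc{} function $u_{p'}$ (taking $P'=P$ when $\ell=k$, and otherwise with $i=k<\ell=j$), the $k$-maximizer $P$ extends to some $\ell$-maximizer $P'\supseteq P$ of $u_{p'}$. Because $P'$ is an $\ell$-maximizer and $\ell$ maximizes $g$, we get $u_{p'}(P')=M_\ell+c\ell=\max_j g(j)=\max_X u_{p'}(X)$, so $P'$ is a $p'$-demand containing $P$, as required. The only step needing genuine care is the passage from ``$P$ is a $p$-demand'' to ``every $p'$-demand is at least as large as $P$''; everything else is either bookkeeping or a direct appeal to Lemma~\ref{lem:maximizers}. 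I would stress that this comparative-statics step does \emph{not} require the per-size optimum $M_j$ to be concave in $j$ --- only that $k$ maximizes $M_j$ --- which keeps the whole argument short.
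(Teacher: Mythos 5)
Your proof is correct and follows essentially the same route as the paper's: both rest on the observation that a uniform price shift preserves the ranking within each size class, a comparative-statics step showing the optimal bundle size can only move weakly upward when $d<0$, and an appeal to Lemma~\ref{lem:maximizers}(a) applied to the \mconc{} function $u_{p'}$ to extend $P$ to a larger maximizer. The only cosmetic difference is that the paper handles the case $|Q|\le|P|$ by showing $P$ itself remains a $p'$-demand, whereas you rule out $|Q|<|P|$ outright via the $g(j)=M_j+cj$ argument; both are valid.
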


\begin{proof}
We prove only part (a), since part (b) is its mirror-image. 

Let $P$ be a $p$-demand and $Q$ a $p'$-demand. We consider two cases.

\textbf{Case 1}: $|Q|\leq |P|$. Note that $u_{p'}(Q) = u_p(Q)-d\cdot|Q|$ and 
$u_{p'}(P) = u_p(P)-d\cdot|P|$, so $u_{p'}(Q) - u_p(Q) \leq u_{p'}(P) - u_p(P)$. This means that, in the move from $p$ to $p'$, $Q$ gained weakly less net-utility than $P$. Hence, if $Q$ is a $p'$-demand, $P$ is necessarily a $p'$-demand too. $P\supseteq P$ so we are done.

\textbf{Case 2}: $|Q|>|P|$. Let $i=|P|$ and $j=|Q|$. Then, $P$ is an $i$-maximizer of the net-utility function $u_p$ and $Q$ is a $j$-maximizer of the net-utility function $u_{p'}$. But, the change in price between $p$ and $p'$ does not affect the preference relation between bundles of the same size. Hence, $P$ is also an $i$-maximizer of $u_{p'}$. Since $j>i$, by Lemma \ref{lem:maximizers}/a there exists a $j$-maximizer of $u_{p'}$ that contains $P$. Call it $P'$. By definition of a $j$-maximizer, $u_{p'}(P')\geq u_{p'}(Q)$. Hence, $P'$ is also a $p'$-demand. $P'\supseteq P$ so we are done.
\end{proof}

\begin{remark}
For a non-GS valuation, the uniform-price-change property may or may not hold. For example, it holds for any valuation on two item-types, since Lemma \ref{lem:maximizers} holds in this case. It does not hold for Bob's valuation in Example \ref{exm:negative}, since when the prices change from \$10,\$10,\$10 to \$50,\$50,\$50, Bob discovers $z$.
\end{remark}

\section{Downward Demand-Flow Property (Main Result)}
\begin{theorem}\label{lem:mconc-implies-io}
	If a valuation function is \mconc{}, then it has the DDF property.
\end{theorem}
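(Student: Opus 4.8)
The plan is to prove only part (b); part (a) is its mirror image under swapping the roles of $p$ and $q$ (which turns every $\Delta_z$ into $-\Delta_z$ and interchanges ``abandoned'' with ``discovered''), so the two parts are logically equivalent and one argument suffices.

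The workhorse will be a single application of \mconcy{}. Assuming the agent discovered $x$, fix a $q$-demand $Q$ with $x\in Q$ and any $p$-demand $P$; since $x$ is not $p$-demanded we have $x\notin P$, so $\{x\}\subseteq Q\setminus P$ is a legal singleton for Definition~\ref{def:mconc}. Applying it with $X=Q$, $Y=P$, $X'=\{x\}$ yields a $Y'\subseteq P\setminus Q$ with $|Y'|\le 1$ and $u(Q\setminus\{x\}\cup Y')+u(P\setminus Y'\cup\{x\})\ge u(Q)+u(P)$. Because prices are additive (Lemma~\ref{lem:mconc-net}), rewriting this with $u_q$ on the first bundle and $u_p$ on the second turns the right-hand side into $u_q(Q)+u_p(P)+\Delta_x-\Delta_{Y'}$, where $\Delta_{Y'}$ is $0$ if $Y'=\emptyset$ and $\Delta_y$ if $Y'=\{y\}$. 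Optimality of $Q$ for $u_q$ then pins down a lower bound on $u_p(P\setminus Y'\cup\{x\})$. I expect this to rule out $Y'=\emptyset$ outright (it would make $P\cup\{x\}$ a $p$-demand, contradicting that $x$ is not $p$-demanded, using $\Delta_x\ge 0$), and to rule out $\Delta_y=\Delta_x$ as well (equality would force both swapped bundles to be demands, again placing $x$ in a $p$-demand). Hence the exchange returns a partner $y\in P\setminus Q$ that is $p$-demanded and carries the strict inequality $\Delta_y>\Delta_x$ that part (b) requires.

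The one thing this single exchange does \emph{not} yet give is that $y$ (or some item of the same flavour) is actually abandoned: $y$ is $p$-demanded but might still be $q$-demanded via some other $q$-demand. To close this gap I would first reduce to a clean \emph{pure price increase}. Put $c:=\Delta_x\ge 0$ and factor $p\to q$ through $q'':=p+\max(\Delta-c,0)$ and $q':=q-c$: then $p\to q''$ raises exactly the items of $H:=\{z:\Delta_z>c\}$ and keeps $x$ (and all of $L:=\{z:\Delta_z\le c\}$) fixed, $q''\to q'$ only lowers prices of the items with $\Delta_z<c$, and $q'\to q$ is a uniform increase by $c$. Lemma~\ref{lem:mconc-d}(b) applied to $q'\to q$, together with the step $q''\to q'$ read as the GS definition applied to its reverse increasing direction, will show that $x$ is already $q''$-demanded, hence discovered in the pure increase $p\to q''$; running the same two steps shows that for every item of $H$, being $q$-demanded implies being $q''$-demanded, so an $H$-item abandoned in $p\to q''$ is abandoned in $p\to q$ as well. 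Since Definition~\ref{def:gs} guarantees that the fixed items $L$ are never abandoned under the increase $p\to q''$, \emph{any} item abandoned there automatically lies in $H$ and so carries $\Delta>c=\Delta_x$, which is exactly what is needed.

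What remains --- and what I expect to be the main obstacle --- is the core statement that \emph{in a pure price increase, discovering an item forces abandoning an item} (equivalently: if nothing is abandoned then nothing is discovered). The exchange of the second paragraph already hands us a $p$-demanded partner $y\in H$; the real difficulty is certifying that \emph{some} such partner is genuinely dropped rather than merely re-demanded in another bundle. My first attempt would be extremal: choose the $p$-demand $P$ and the $q''$-demand $Q$ minimising $|P\triangle Q|$ and use the exchange to contradict minimality unless $y$ is abandoned. If the optimality of the swapped bundles proves too weak to drive this, I would fall back on the induction on $m$ used for Lemma~\ref{lem:maximizers}: restrict $u$ to the other items when some item is demanded under neither price, and pass to a marginal valuation $u_{\{1\}+}$ (again \mconc{} by Lemma~\ref{lem:mconc-marginal}) when a chosen $p$-demand and a chosen $q''$-demand share an item, the uniform base case being immediate from Lemma~\ref{lem:mconc-d}, since a uniform increase extends every demand to a larger one and hence can create no new demanded item. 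The delicate point in either route is the same: propagating the hypothesis ``nothing is abandoned'' down to the smaller instances, or else ruling out that the exchange partner resurfaces as a demand.
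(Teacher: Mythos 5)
Your first two paragraphs are sound and, up to the mirror symmetry between parts (a) and (b), essentially reconstruct the paper's skeleton. The single $M^{\natural}$-exchange does yield a $p$-demanded partner $y\in P\setminus Q$ with $\Delta_y>\Delta_x$ (your computation ruling out $Y'=\emptyset$ and $\Delta_y\le\Delta_x$ via optimality of $Q$ and $P$ is correct), and your factorization $p\to q''\to q'\to q$ is, modulo a uniform shift, the mirror image of the paper's decomposition $p\to p'\to q'\to q$, with Lemma~\ref{lem:mconc-d} and Definition~\ref{def:gs} playing exactly the roles you assign them. The genuine gap is that the statement you have reduced everything to --- ``in a pure price increase, discovering an item forces abandoning an item'' --- is where the entire difficulty of the theorem is concentrated, and you do not prove it: you name two candidate strategies (an extremal choice of $P,Q$ minimizing $|P\triangle Q|$, or the induction behind Lemma~\ref{lem:maximizers}) and carry out neither. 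As written this is a plan, not a proof. For comparison, the paper closes the corresponding leg ($q'\to q$ in part (a)) with a short direct argument: take a $q'$-demand $Q'\ni x$ and a $q$-demand $Q$; since $u_q(Q)>u_q(Q')$ while $u_{q'}(Q)\le u_{q'}(Q')$, the bundle $Q$ must have become cheaper than $Q'$ did in the move from $q'$ to $q$, and the only items that got cheaper are those with $\Delta_y<\Delta_x$, so $Q$ contains such an item.

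There is also a concrete reason both of your fallback strategies would stall: the reduced claim is false under the literal Definitions~\ref{def:p-demanded} and \ref{def:abandon-discover} once ties are allowed. Take two items with unit-demand values $10$ and $9$, and raise the price of the first from $5$ to $6$ while fixing the second at $5$: the second item is discovered (it enters the demand correspondence by tying at net utility $4$), yet the first item still belongs to a demand and so is not abandoned. Hence no extremal or inductive argument can establish ``discovery forces abandonment'' as you have posed it; what your exchange step (and the paper's $u_q(Q)>u_q(Q')$ comparison) actually certifies is that a \emph{specific demanded bundle} is given up in favour of one containing a lower-$\Delta$ item, which is weaker than abandonment of an \emph{item} in the sense of Definition~\ref{def:abandon-discover}. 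This same tie is the delicate point hiding behind the paper's phrase ``which was not previously demanded'' in its final step, so closing it cleanly requires either strengthening the hypotheses (e.g.\ strict suboptimality) or rephrasing the conclusion in terms of bundles rather than items; simply executing your two sketched strategies will not do it.
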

\begin{proof}
	\begin{figure}
		\hskip -1cm
		\includegraphics[scale=0.33]{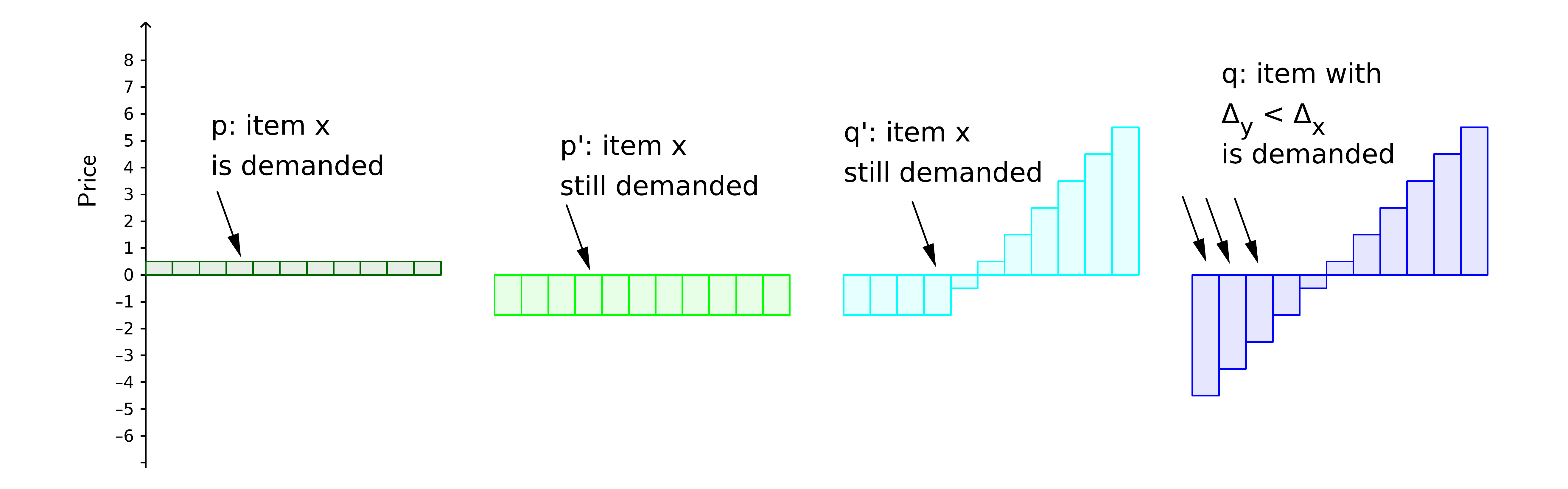}
		\caption{Illustration of prices in the proof of Lemma \ref{lem:mconc-implies-io}. Here there are 11 items and $p_x=0.5$ for all items x. The items are ordered in increasing order of $\Delta_x=q_x-p_x$.\label{fig:pricechange}}
	\end{figure}
Let $p,q$ be two price-vectors and $\Delta_x=q_x-p_x$. We now prove part (a) in the DDF definition: if the agent abandoned an item x with $\Delta_x\leq 0$, then the agent must have discovered some item y with $\Delta_y<\Delta_x$. The proof of part (b) is analogous.
	
Consider an item x with $\Delta_x\leq 0$ that is $p$-demanded but not $q$-demanded. Define a price-vector $p'$ as (see Figure \ref{fig:pricechange}):
\begin{align*}
	\forall y: p'_y = p_y + \Delta_x
\end{align*}
By Lemma \ref{lem:mconc-d}(a), all items that are $p$-demanded, including item x, are also $p'$-demanded.
	
	Define the price-vector $q'$ as (see Figure 1):
	\begin{align*}
	\Delta_y\leq \Delta_x:&& q'_y &= p'_y = p_y+\Delta_x \\
	\Delta_y\geq \Delta_x:&& q'_y &= p'_y + (\Delta_y-\Delta_x) = p_y+\Delta_y = q_y
	\end{align*}
	Between $p'$ and $q'$, the prices of items above x weakly increased while the prices of item x and the items below x remained the same. By the GS property, item x is $q'$-demanded, and all items below x that were $p$-demanded are $q'$-demanded.
	
	The last step of the proof --- the move from $q'$ to $q$ --- is true for arbitrary valuations (not only GS). Since x was $q'$-demanded, there was a $q'$-demand $Q'$ that contained x. Since x is not $q$-demanded, $Q'$ is not a $q$-demand. This means that there must be a different $q$-demand, say $Q$, that became more attractive than $Q'$, i.e, $u_q(Q)>u_q(Q')$. But $u_{q'}(Q)\leq u_{q'}(Q')$, so necessarily, in the move from $q'$ to $q$, the bundle $Q$ became cheaper more than $Q'$. Since the only items that became cheaper from $q'$ to $q$ are items with $\Delta_y<\Delta_x$, the bundle $Q$ must contain at least one of these items y which was not previously demanded. This implies that our agent, who abandoned x, has discovered y.
\end{proof}

\begin{fullversion}
\section{Remark}
\citet{Gul1999Walrasian} prove that GS is equivalent to two other properties: Single Improvement (SI) and No Complementaries (NC). They also present a property which they call Strong No Complementaries (SNC) and prove that it implies NC. They do not prove that NC implies SNC, but they also do not prove otherwise, i.e, they do not give an example of a valuation that is NC and not SNC. Based on our failure to find such an example ourselves, and based on the similarity between the SNC condition and the MC condition, we conjecture that SNC is actually equivalent to MC (and hence, to NC and SI and GS).
\end{fullversion}

\paragraph{\bf Acknowledgements} This research was funded in part by the ISF grants 1083/13 and 1224/12, the Doctoral Fellowships of Excellence Program and the Mordecai and Monique Katz Graduate Fellowship Program at Bar-Ilan University. 
We are grateful to Assaf Romm and Elizabeth Baldwin and an anonymous referee for their helpful comments.

\bibliographystyle{elsarticle-num-names}

\end{document}